\renewcommand{\leq}{\leqslant}
\renewcommand{\geq}{\geqslant}
\renewcommand{\le}{\leqslant}
\renewcommand{\ge}{\geqslant}
\newcommand*{\eg}{e.g.\@\xspace}
\newcommand*{\ie}{i.e.\@\xspace}
\def\cqedsymbol{\ifmmode$\lrcorner$\else{\unskip\nobreak\hfil
\penalty50\hskip1em\null\nobreak\hfil$\lrcorner$
\parfillskip=0pt\finalhyphendemerits=0\endgraf}\fi}
\newcommand{\cqed}{\renewcommand{\qed}{\cqedsymbol}}
\title{Reachability for Bounded Branching VASS} 
\author{Filip Mazowiecki}{LaBRI, Université de Bordeaux, France}{filip.mazowiecki@u-bordeaux.fr}{}{This study has been carried out with financial support from the French State, managed by the French National Research Agency (ANR) in the frame of the ``Investments for the future'' Programme IdEx Bordeaux  (ANR-10-IDEX-03-02).}
\author{Micha\l{} Pilipczuk}{University of Warsaw, Poland}{michal.pilipczuk@mimuw.edu.pl}{}{This work is 
a part of project TOTAL that has received funding from the European Research Council (ERC) 
under the European Union's Horizon 2020 research and innovation programme, grant agreement No.~677651.}
\authorrunning{J.-F. Mazowiecki and Mi. Pilipczuk}
\keywords{Branching VASS, counter machines, reachability problem, bobrvass}
\newcommand{\bovass}{$\textsc{BoVASS}$\xspace}
\newcommand{\onebovass}{$1-\textsc{BoVASS}$\xspace}
\newcommand{\twobovass}{$2-\textsc{BoVASS}$\xspace}
\newcommand{\vass}{$\textsc{VASS}$\xspace}
\newcommand{\onevass}{$1-\textsc{VASS}$\xspace}
\newcommand{\twovass}{$2-\textsc{VASS}$\xspace}
\newcommand{\doublebovass}{$1-\textsc{BoVASS}$^{\times 2}\xspace}
\newcommand{\divbovass}{$1-\textsc{BoVASS}$^{\div 2}\xspace}
\newcommand{\doubledivbovass}{$1-\textsc{BoVASS}$^{\times 2,\div 2}\xspace}
\newcommand{\bobr}{$\textsc{BoBrVASS}$\xspace}
\newcommand{\onebobr}{$1-\textsc{BoBrVASS}$\xspace}
\newcommand{\twobobr}{$2-\textsc{BoBrVASS}$\xspace}
\newcommand{\bvass}{$\textsc{BrVASS}$\xspace}
\newcommand{\onebvass}{$1-\textsc{BrVASS}$\xspace}
\newcommand{\twobvass}{$2-\textsc{BrVASS}$\xspace}
\newcommand{\doublebobr}{$1-\textsc{BoBrVASS}$^{\times 2}\xspace}
\newcommand{\divbobr}{$1-\textsc{BoBrVASS}$^{\div 2}\xspace}
\newcommand{\doubledivbobr}{$1-\textsc{BoBrVASS}$^{\times 2,\div 2}\xspace}
\newcommand{\V}{\mathcal{V}}
\newcommand{\B}{\mathcal{B}}
\newcommand{\set}[1]{\{ #1 \}}
\newcommand{\N}{\mathbb{N}}
\newcommand{\Z}{\mathbb{Z}}
\newcommand{\nn}{\bm{n}}
\newcommand{\mm}{\bm{m}}
\newcommand{\zz}{\bm{z}}
\newcommand{\vv}{\bm{v}}
\newcommand{\ww}{\bm{w}}
\newcommand{\vzero}{\bm{0}}
\begin{document}

\maketitle


\begin{abstract}
In this paper we consider the reachability problem for bounded branching VASS. Bounded VASS are a variant of the classic VASS model where all values in all configurations are upper bounded by a fixed natural number, encoded in binary in the input. This model gained a lot of attention in~2012 when Haase et al. showed its connections with timed automata. Later in~2013 Fearnley and Jurdzi\'{n}ski proved that the reachability problem in this model is PSPACE-complete even in dimension~$1$. Here, we investigate the complexity of the reachability problem when the model is extended with branching transitions, and we prove that the problem is EXPTIME-complete when the dimension is~$2$ or larger.
\end{abstract}

\vskip -0.7cm
\begin{picture}(0,0)
\put(392,25)
{\hbox{\includegraphics[width=40px]{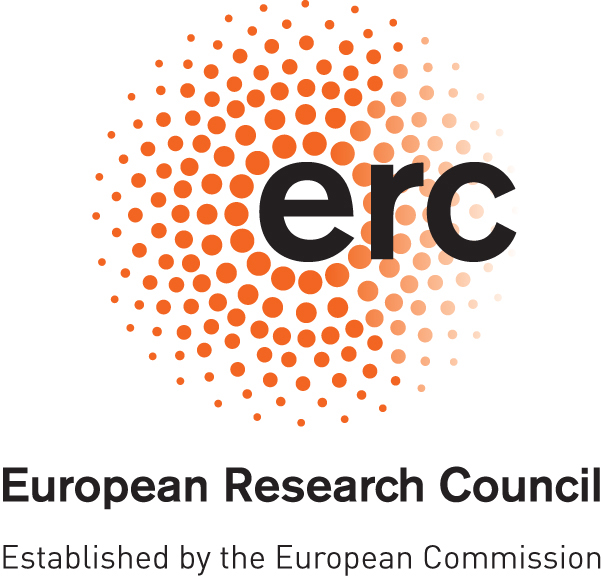}}}
\put(382,-35)
{\hbox{\includegraphics[width=60px]{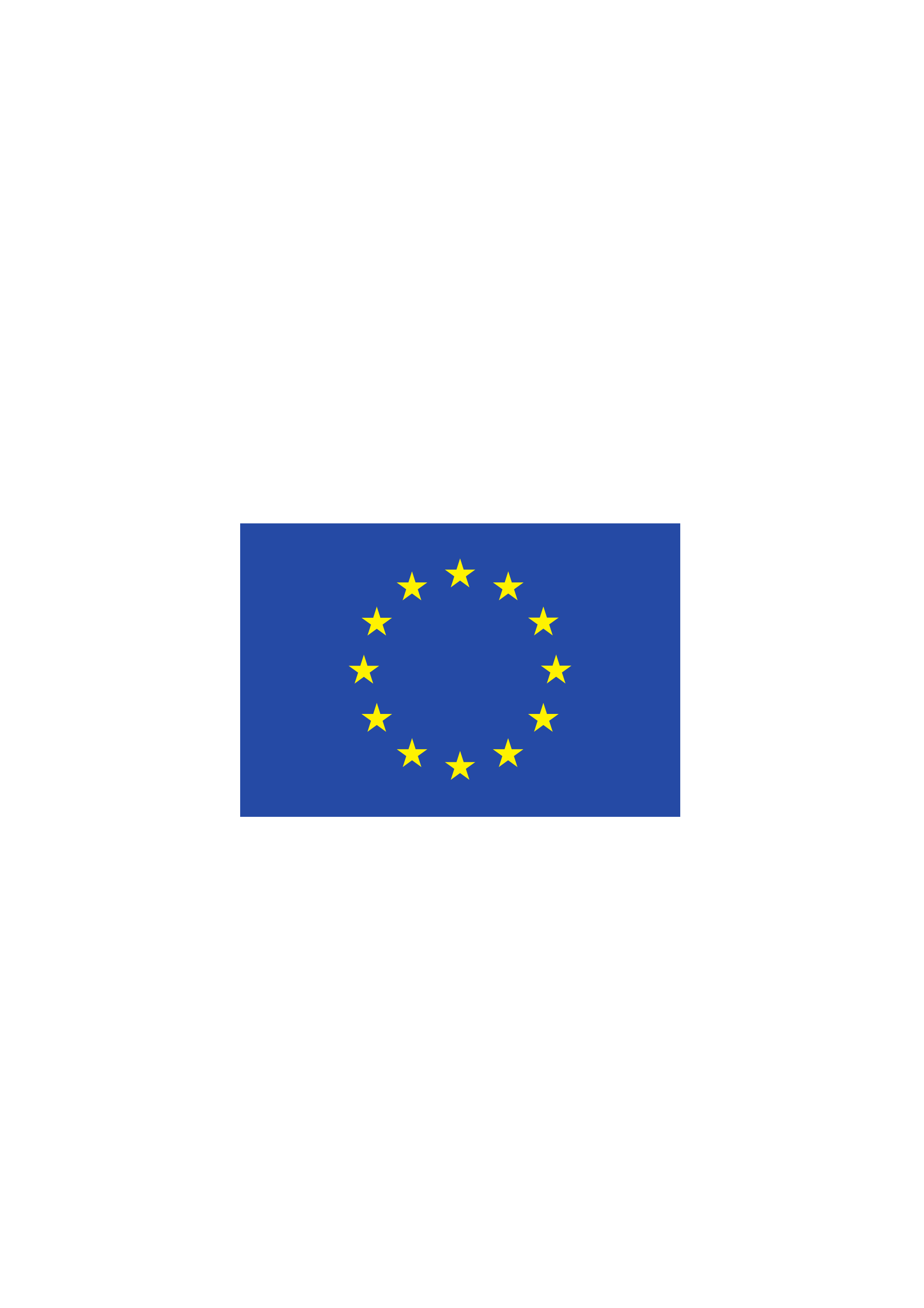}}}
\end{picture}

\section{Introduction}
\label{sec:introduction}

Vector addition systems with states (\vass or $d$-\vass when the dimension is $d$), also known as Petri nets, are a long established model for concurrent systems. 
The branching generalisation of VASS (\bvass or $d$-\bvass\footnote{The standard abbreviation for this model appearing in the literature is $\textsc{BVASS}$. 
Since in this work we work with both bounded and branching models of $\vass$, we prefer to disambiguate these variants by using prefixes $\textsc{Bo}$ and $\textsc{Br}$, respectively.}) 
is among the most popular extensions due to its multiple connections with database theory~\cite{BojanczykMSS09}, recursively parallel programs~\cite{BouajjaniE13}, timed pushdown systems~\cite{ClementeLLM17}, 
and many others.
A configuration of a $d$-\vass (or a $d$-\bvass) is a pair consisting of a state and a vector over $\N^d$. 
In $d$-\vass, the transitions update the configurations by changing the state and updating the configuration vector using vectors over $\Z^d$.
In $d$-\bvass, in addition there can be branching transitions, which create two independent copies of the system splitting the configuration vector among them.

The central decision problem for both \vass and \bvass is the \emph{reachability problem}, i.e., whether there exists a computation between given input configurations. 
For \vass we know that reachability is decidable~\cite{Mayr84}, moreover, recently the upper bound on the complexity was improved to Ackermann~\cite{LerouxS19}. 
Whether reachability is decidable for \bvass is one of the most intriguing open problems in formal verification, highlighted for example in the survey of Boja\'nczyk~\cite{Bojanczyk14a}.

To understand better the source of hardness, the reachability problem was studied when the dimension of the vectors is fixed to a constant. 
Then it matters how the vectors of the models are encoded and one must consider two cases: unary and binary encodings. 
Particularly, when the dimension is~$1$ it is known that $\onevass$ with vectors encoded in unary is NL-complete~\cite{ValiantP75}, while for the binary encoding it becomes NP-complete~\cite{HaaseKOW09}. 
For $\onebvass$ the complexities are also known, namely for the unary encoding the reachability  is PTIME-complete~\cite{GollerHLT16}, while for the binary encoding the problem is PSPACE-complete~\cite{FigueiraLLMS17}. 
We present these results in Table~\ref{table:complexity}. 
For these models the complexities are easy to remember as in both cases the complexity of the branching variant adds alternation to the complexity, as expected. 
However, branching is not alternation and this intuition is possibly misleading. 
Already in dimension~$2$ reachability for $\twovass$ is known to be NL-complete~\cite{EnglertLT16} and PSPACE-complete~\cite{BlondinFGHM15} for unary and binary encodings, respectively; 
while for $\twobvass$ it is not even known whether reachability is decidable.

\begin{table}
\centering
\caption{Complexities in dimension~$1$. In the last entry only a PSPACE lower bound is known.}\label{table:complexity}
\begin{tabular}{ c| c | c | c |}
& unary & binary & binary with a bound \\ \hline
$\onevass$ & NL-complete~\cite{ValiantP75} & NP-complete~\cite{HaaseKOW09} & PSPACE-complete~\cite{FearnleyJ15} \\  \hline
$\onebvass$ & PTIME-complete~\cite{GollerHLT16} & PSPACE-complete~\cite{FigueiraLLMS17} & in EXPTIME \\   \hline
\end{tabular}
\end{table}

In this paper we are interested in the variants of \vass and \bvass with vectors encoded in binary, where all values in all configurations are upper bounded by some natural number, given on input. 
We shall denote these classes \bovass and \bobr, respectively.
The model \bovass was popularised due to its connections with timed automata~\cite{HaaseOW12}, 
where it was shown that reachability for timed automata is interreducible in logarithmic space with reachability for \bovass. 
At the time it was known that for timed automata this problem is PSPACE-complete~\cite{AlurD94}, even when there are only 3 clocks available~\cite{CourcoubetisY92}, and NL-complete for the case with~1 clock; 
leaving open the complexity for timed automata with~2 clocks. 
The problem for two clocks was proved later to be PSPACE-complete~\cite{FearnleyJ15}, where the main technical contribution was proving that reachability for $\onebovass$ is PSPACE-complete. 
Then the final result for timed automata with two clocks followed from the reductions 
presented in~\cite{HaaseOW12}\footnote{Both papers~\cite{HaaseOW12,FearnleyJ15} write \emph{bounded counter automata} instead of \bovass. We write \bovass to have a uniform terminology for all models in this paper.}. 
Regarding the branching extension a more powerful version of the model $\onebobr$ was studied in~\cite{ClementeLLM17}, where connections between \bvass and pushdown timed automata were explored.

The reachability for $\onebovass$ has become one of the standard problems used in reductions proving PSPACE-hardness of various decision problems. 
Apart from the already mentioned application to prove PSPACE-hardness of reachability for two clock automata, $\onebovass$ was used e.g., to prove: PSPACE-completeness
of reachability for two dimensional \vass~\cite{BlondinFGHM15}; and PSPACE-completeness of regular separability of one counter automata~\cite{CzerwinskiL17}. 

\subparagraph*{Our contribution.}
In this paper we study the complexity of the reachability problem for \bobr. This problem can be easily proved to be in EXPTIME, but only PSPACE-hardness is known; e.g., 
due to the mentioned PSPACE-hardness of $\onebovass$. 
Our main contribution is that the reachability problem for $\twobobr$ is EXPTIME-complete and we leave the complexity of $\onebobr$ as an open problem (see Table~\ref{table:complexity}). 
We believe that if there exists an EXPTIME lower bound for reachability for $\onebobr$, then this would give a new, interesting starting problem for EXPTIME-hardness reductions. 
As an example application, we prove that reachability for $\onebobr$ is reducible in polynomial time to reachability in $\twobvass$, 
which slightly improves the state of art, as currently only a PSPACE lower bound is known for $\twobvass$~\cite{BlondinFGHM15}.

\subparagraph*{Organisation.} 
After presenting the main definitions and problems in Section~\ref{sec:preliminaries} and Section~\ref{sec:double}, we prove the EXPTIME-completeness of reachability for $\twobobr$ in Section~\ref{sec:hardness}. 
Then in Section~\ref{sec:limitations} we give some indications that using our techniques it would be hard to lift the EXPTIME-hardness result to $\onebobr$. 
We conclude in Section~\ref{sec:conclusion} providing a polynomial time reduction of reachability for $\onebobr$ to reachability for $\twobvass$.

\section{Preliminaries}
\label{sec:preliminaries}

\subparagraph*{VASS model.}
A \emph{Vector Addition System with States} ($\vass$) in dimension $d$ (in short $d$-$\textsc{VASS}$) is $\V = (Q,\Delta)$, 
where: $Q$ is a finite set of states and $\Delta \subseteq Q \times \Z^d \times Q$ is a finite set of transitions. The size of $\V$ is $|Q| + |\Delta| \cdot r$, where $r$ is the maximal representation size of vectors in $\Delta$ (written in binary).

A configuration is a pair $(q,\nn)$, denoted $q(\nn)$, where $\nn \in \N^d$.
Whenever $d=1$, we identify vectors in dimension~$1$ with natural numbers and in particular we denote configurations by $q(n)$ for $n \in \N$.
We will often refer to values of vector components as counter values and when convenient we will denote them $c_1,\ldots,c_d$.
We write that $q$ is the state of the configuration and $\nn$ is the vector of the configuration.
We write $p(\nn) \xrightarrow{t} q(\mm)$ if $t = (p, \zz, q) \in \Delta$ and $\nn + \zz = \mm$.
More generally, a {\em{run}} is a sequence of configurations 
\begin{align}\label{eq:run}
q_0(\nn_0) \xrightarrow{t_0} q_1(\nn_1) \xrightarrow{t_1} \ldots \xrightarrow{t_k} q_{k+1}(\nn_{k+1}),
\end{align}
where $t_i = (q_i,\zz_i,q_{i+1}) \in \Delta$ and $\nn_i + \zz_i = \nn_{i+1}$ for all $0 \le i \le k$. Notice that configurations have vectors over natural numbers and transitions have vectors over integers. We write $p(\nn) \xrightarrow{t_0\ldots t_k}q(\mm)$ or $p(\nn) \to^* q(\mm)$ if there exists a run like~\eqref{eq:run} such that $p(\nn) = q_0(\nn_0)$ and $q(\mm) = q_{k+1}(\nn_{k+1})$. We write $p(\nn) \not \to^* q(\mm)$ if no such run exists.

A bounded $d$-$\vass$ ($d$-$\bovass$) is a $d$-$\vass$ equipped with a bound $B$ such that numbers in all configurations cannot exceed $B$.
Formally, a $\bovass$ $\V$ is a tuple $(Q,\Delta,B)$ such that $(Q,\Delta)$ is a $\vass$ and $B \in \N$. We will assume that $B$ is given in binary. A run is defined in the same way like for $d$-$\vass$ but in~\eqref{eq:run} we additionally assume that $\nn_i \in \set{0,\ldots,B}^d$ for all $0\le i \le k$. That is, all components in all vectors are bounded by $B$.
Whenever speaking about bounded models of vector addition systems, we add the length of the bit representation of $B$ to the size of the system.

In the $d$-$\vass$ model one can simulate lower bound inequality tests as follows. We write that $t = (p,n,i,q) \in Q\times \N \times \set{1,\ldots,d} \times Q$ is an inequality test, and we allow to write $p(\nn) \xrightarrow{t} q(\mm)$ only if $\nn=\mm$ and $c_i\geq n$ in both configurations. We use the notation $t = (p,q)_{c_i \ge n}$ for simplicity.
In a $d$-$\vass$ $\V$ such a test can be simulated with one extra state $r$ and two transitions: $t_1 = (p,-\vv_i,r)$ and $t_2 = (r,\vv_i,q)$, where $\vv_i[j] = n$ for $i = j$ and $\vv_i[j] = 0$ otherwise. Then it is easy to see that $p(\nn) \xrightarrow{t} q(\mm)$ iff $p(\nn) \xrightarrow{t_1t_2} q(\mm)$ for all pairs of configurations $p(\nn)$ and $q(\mm)$.
If $\V$ is additionally a $\bovass$ with a bound $B \ge n$, then we can also implement analogously defined upper bound inequality tests $(p,q)_{c_i \le n}$. Indeed, it suffices to add one extra state $r$ and two transitions: $t_1 = (p,\ww_i,r)$ and $t_2 = (r,-\ww_i,q)$, where $\ww_i[j] = B-n$ for $i = j$ and $\ww_i[j] = 0$ otherwise.
Finally, equality tests $(p,q)_{c_i = n}$ can be encoded by a lower bound test followed by an upper bound test.
Slightly abusing the notation, we will assume that such inequality and equality tests are allowed in the transition set of $\bovass$ models, as they can be simulated as above at the cost of increasing the size of the system by a constant multiplicative factor.

\begin{definition}
\label{definition:compute}
Fix some function $f \colon \set{0,\ldots,M}^d \to \set{0,\ldots,N}^d$ for some $M,N \in \N$. We say that a $d$-$\bovass$ \emph{computes} $f$ if there exist states $p, q \in Q$ such that for all $\nn\in \set{0,\ldots,M}^d$ and $\mm\in \N^d$, we have $p(\nn) \to^* q(\mm)$ if and only if $\mm=f(\nn)$. The bound~$B$ of the $\bovass$ does not have to be equal to~$M$ or~$N$.
\end{definition}

Note that in the above definition, we do not specify how the $\bovass$ computing $f$ behaves starting in a configuration outside of the domain of $f$. 
However, it is easy to restrict the system to block on such configurations, \eg by adding tests $c_i \le M$ in the beginning.

\begin{example}
\label{example:copy}
Consider the copying function $f \colon \set{0,\ldots,M}^2 \to \set{0,\ldots,M}^2$ defined by $f(n,m) = (n,n)$ for all $0 \le n,m \le M$. We define a $2$-$\bovass$ $\V = (Q,\Delta,B)$ computing $f$ as follows. The bound is $B = M(M+2)$; the set of states is $Q = \set{p,r_1,r_2,q}$; the transitions are depicted in Figure~\ref{figure:copy}. To prove that this is the right function it suffices to observe that:
\begin{itemize}
 \item $p(n,m) \to^* r_1(k,l)$ iff $k = n$ and $l = 0$;
 \item $r_1(n,0) \to^* r_2(k,l)$ iff $k = 0$ and $l = (M+2)n$;
 \item and $r_2(0,(M+2)n) \to^* q(k,l)$ iff $k = l = n$.
\end{itemize}
\end{example}

\begin{figure}[!ht]
\centering
\begin{tikzpicture}[node distance = 2.8cm]
\node (p) {$p$};
\node[right of = p] (r1) {$r_1$};
\node[right of = r1] (r2) {$r_2$};
\node[right of = r2] (q) {$q$};

\path
(p) edge[->,loop above,>=stealth] node[above] {\footnotesize{$(0,-1)$}} (p)
(p) edge[->,>=stealth] node[above] {\footnotesize{$c_2 = 0$}} (r1)
(r1) edge[->,loop above,>=stealth] node[above] {\footnotesize{$(-1,M+2)$}} (r1)
(r1) edge[->,>=stealth] node[above] {\footnotesize{$c_1 = 0$}} (r2)
(r2) edge[->,loop above,>=stealth] node[above] {\footnotesize{$(1,-(M+1))$}} (r2)
(r2) edge[->,>=stealth] node[above] {\footnotesize{$c_2 \le M$}} (q)
;
\end{tikzpicture}
\caption{The $\bovass$ computing $f(n,m) = (n,n)$.}\label{figure:copy}
\end{figure}
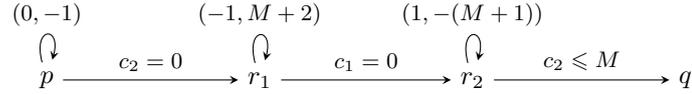

Note that in Example~\ref{example:copy}, the size of $\V$ is polynomial in the representation of $M$. A $2$-$\bovass$ of exponential size is trivial to define. In fact, for every function $f$ there exists a $\bovass$ computing $f$ of size exponential in the representation of the sizes of domain and codomain: one can simply use an equality test for every element of the domain to specify the corresponding value.


\subparagraph*{Branching VASS model.}
A \emph{Branching Vector Addition System with States} ($\bvass$) in dimension $d$ (in short $d$-$\textsc{BrVASS}$) is $\B = (Q,\Delta_1,\Delta_2,q_0)$, where: $Q$ is a finite set of states, $\Delta_1 \subseteq Q \times Z^d \times Q$ is the set of unary transitions, $\Delta_2 \subseteq Q^3$ is the set of binary transitions, and $q_0 \in Q$ is an initial state. The notions regarding configurations are the same as for $\vass$. A {\em{run}} of a $\bvass$ is a tree labelled with configurations such that internal nodes have either one or two children, and the following conditions hold.
\begin{itemize}
 \item For every internal node with one child, let $p(\nn)$ be its label and let $q(\mm)$ be the label of its child. Then there exists $(p,\zz,q) \in \Delta_1$ such that $\nn + \zz = \mm$;
 \item For every internal node with two children, let $p(\nn)$ be its label and let $q_1(\mm_1)$, $q_2(\mm_2)$ be the labels of its children. Then $(p,q_1,q_2) \in \Delta_2$ and $\nn = \mm_1 + \mm_2$.
  \item The leaves are labelled $q_0(\vzero)$, where $\vzero$ is the zero vector.
\end{itemize}
Notice that a $\bvass$ with $\Delta_2 = \emptyset$ is essentially a $\vass$ with a distinguished state $q_0$. A \emph{partial run} is a run that does not have to satisfy the last condition.

A {\em{bounded}} $d$-$\bvass$ ($d$-$\bobr$) is a $d$-$\bvass$ equipped with a bound $B\in \N$ that bounds all numbers in all configurations, similarly as for $d$-$\bovass$. As in the case of $\bovass$, for every $\bobr$ we abuse the notation allowing for inequality and equality test transitions in $\Delta_1$.

A $(p(\nn),q(\mm))$-context (or just context if the configurations are not relevant) of a $\bvass$ is a partial run such that the root of the tree is labelled $p(\nn)$ and one of the leaves is labelled $q(\mm)$; all other leaves are labelled $q_0(\vzero)$. We write $p(\nn) \to^* q(\mm)$ if there exists a $(p(\nn),q(\mm))$-context and we write $p(\nn) \not \to^* q(\mm)$ otherwise. Using this notation, the definition of computing a function $f \colon \set{0,\ldots,M}^d \to \set{0,\ldots,N}^d$ by a $d$-$\bobr$  generalises from the definition for $d$-$\bovass$ in the expected way.

%

For future reference, we state and prove the following easy claim.

\begin{lemma}\label{lem:increase-bound}
Given a $d$-$\bobr$ $\B$ with state set $Q$ and upper bound $B$ together with an integer $B'\geq B$, one can compute in polynomial time a $d$-$\bobr$ $\B'$ with state set $Q'\supseteq Q$ and upper bound $B'$ 
such that the following holds: for any $p,q\in Q$ and $n,m\in \set{0,\ldots,B}$, we have $p(n)\to^* q(m)$ in $\B$ if and only if $p(n)\to^* q(m)$ in $\B'$.
\end{lemma}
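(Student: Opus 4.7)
The plan is to obtain $\B'$ from $\B$ by keeping the same state set and the same transitions, but wrapping every transition with ``pre-tests'' that force the source configuration to respect the old bound $B$. Formally, let $\B'$ have upper bound $B'$ and contain all states of $Q$. For each original transition $t$ (whether unary or binary) leaving some state $s\in Q$, introduce fresh intermediate states and replace $t$ by a gadget that first executes the inequality tests $c_i \le B$ for $i=1,\ldots,d$ and only then fires $t$. Each such test is simulated in $\B'$ using the construction recalled in Section~\ref{sec:preliminaries}: temporarily add $B'-B$ to $c_i$ and then subtract it, which is a valid move in $\B'$ iff $c_i\le B$ initially (here we crucially use $B'\ge B$). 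The overall increase in size is polynomial.

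For the ``only if'' direction, any context witnessing $p(n)\to^* q(m)$ in $\B$ uses only configurations with coordinates in $\{0,\ldots,B\}$, so every wrapped transition in $\B'$ can be fired by silently passing the $c_i\le B$ pre-tests and then performing the original action. This produces a context in $\B'$ with the same root and distinguished leaf.

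For the ``if'' direction, take a context $T'$ in $\B'$ with root $p(n)$ and distinguished leaf $q(m)$, where $n,m\le B$. Contract every test gadget in $T'$ back into a single original transition, obtaining a tree $T$ whose internal nodes correspond to firings of original transitions of $\B$. By construction, the pre-test in each gadget guarantees that the source configuration of every such firing has all coordinates at most $B$. For binary transitions, the two children split the root's vector coordinatewise, so their configurations are automatically dominated by the root and hence also bounded by $B$. The leaves labeled $q_0(\vzero)$ trivially respect the bound, and the distinguished leaf $q(m)$ does so by assumption. Therefore $T$ is a valid context of $\B$, yielding $p(n)\to^* q(m)$ in $\B$.

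The main subtlety is ensuring that no intermediate configuration of the contracted tree $T$ can slip above $B$ in $\B'$; this is exactly why pre-tests are placed on \emph{every} original transition. Post-tests are not needed: the only nodes of $T$ that are not sources of some original transition are leaves, and these are controlled either by the initial condition $q_0(\vzero)$ or by the standing hypothesis $m\le B$ on the distinguished leaf. I expect no real obstacle beyond checking these details carefully; the argument is essentially bookkeeping once the wrapping construction is fixed.
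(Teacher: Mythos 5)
Your proposal is correct and essentially identical to the paper's one-line proof: the paper inserts a sequence of $c_i\le B$ inequality tests \emph{after} each transition of $\B$, while you insert them \emph{before}; in both cases the tests are simulated by adding and subtracting $B'-B$ under the new bound $B'$, and the bookkeeping you spell out (internal nodes controlled by the tests, leaves by $q_0(\vzero)$ or by the hypothesis $m\le B$) is exactly the routine verification the paper leaves implicit.
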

\begin{proof}
It suffices to modify $\B$ as follows: after each transition of $\B$, add an additional sequence of inequality tests checking that all counter values are bounded by $B$.
\end{proof}

\subparagraph*{The reachability problem.}
The reachability question for all mentioned models is, given a system within the model (say, a $\vass$ or a $\bobr$ etc.) and two configurations $p(\nn)$, $q(\mm)$, whether $p(\nn) \to^* q(\mm)$\footnote{For $\bvass$ usually this question is phrased asking for the existence of a run not a context. In this scenario it means that in the input of the problem one requires $q(\mm) = q_0(\vzero)$. It is easy to see that the two questions are equivalent in terms of complexity for all variants of $\bvass$ in this paper.}.

In this paper we are interested in the reachability problem for bounded models with vectors encoded in binary. 
For $d$-$\bovass$ (\ie without branching), the problem is easily contained in PSPACE for any $d$, because configurations can be described in polynomial space and solving the reachability problem amounts to nondeterministically guessing a run. 
Fearnley and Jurdzi\'nski proved in~\cite{FearnleyJ15} that, in fact, the problem is PSPACE-complete even for $d = 1$.

For the $\bobr$ model the situation is somewhat similar, however there is a gap in our understanding. 
On one hand, the problem is trivially in EXPTIME for any $d$, because the total number of configurations is exponential in the size of the input system. 
Note here, that even though again every configuration can be described in polynomial space, due to branching we cannot apply the same reasoning as for $\bovass$ to argue PSPACE membership.
For lower bounds, the result of~\cite{FearnleyJ15} shows PSPACE-hardness.
This leaves a gap between PSPACE and EXPTIME, which is the starting point of our work.

In this paper we show that the reachability problem for $d$-$\bobr$ model is EXPTIME-complete for any $d \geq 2$, leaving as an open problem the case of $d = 1$.

\begin{theorem}
\label{theorem:main}
For all $d\geq 2$, the reachability problem for $d$-$\bobr$ is EXPTIME-complete.
\end{theorem}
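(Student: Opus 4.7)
\subparagraph*{Proof plan.}
The EXPTIME upper bound is routine. A $d$-$\bobr$ with bound $B$ has at most $|Q|\cdot(B+1)^d$ configurations, so only exponentially many in the input size. One computes the set of pairs $(p(\nn),q(\mm))$ admitting a $(p(\nn),q(\mm))$-context by bottom-up dynamic programming: start with trivial contexts consisting of a single leaf at $q_0(\vzero)$, and saturate under unary transitions and binary transitions from $\Delta_2$ (which combine two already-derived contexts). The closure runs in polynomial time per configuration, hence EXPTIME in total.

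For hardness it suffices to handle $d=2$; the case $d\geq 3$ follows by padding with idle counters via a trivial modification analogous to Lemma~\ref{lem:increase-bound}. The plan is to reduce from the acceptance problem for alternating polynomial-space Turing machines, which is EXPTIME-complete by the Chandra--Kozen--Stockmeyer theorem. Given a machine $M$ and an input $w$, each configuration of $M$ on $w$ uses polynomially many bits and therefore fits into $\set{0,\ldots,B}$ with $B=2^{\mathrm{poly}(|M|+|w|)}$. The construction stores the current configuration in the first counter and treats the second counter as scratch space, assembling a $2$-$\bobr$ of polynomial size whose reachability question simulates acceptance of $M$ on $w$, with the $\bobr$ reaching $q_0(\vzero)$ on every leaf precisely when $M$ accepts.

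Each Turing-machine step becomes a gadget that inspects the symbol under the head (by divisions and modular tests realised through bounded arithmetic in the style of Example~\ref{example:copy}), writes back the updated encoding, and transitions to a control state carrying the new Turing state. Existential states of $M$ are handled by non-deterministic choice between the successor gadgets, and an accepting Turing configuration is recognised by a final routine that drives both counters down to $0$ and jumps to $q_0$. Universal states are the delicate case: to start two sibling subcomputations from the same encoded configuration $c$, we first double $c$ into $2c$ using the doubling primitive developed in Section~\ref{sec:double}, then fire a binary transition $(p,q_1,q_2)\in\Delta_2$, and finally force each child to begin with first counter equal to exactly $c$ by an equality test. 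This equality test rules out any adversarial split of $2c$ into unequal halves, so both children are obliged to simulate $M$ from the intended configuration.

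The main obstacle is exactly the doubling primitive that underlies the universal-state gadget, which is also the reason the second counter is indispensable: we need a polynomial-size subsystem computing $n \mapsto 2n$ on counters bounded by $B$, and this is naturally realised by a loop transferring units from one counter to the other at rate $1{:}2$, a construction that is crucially two-counter. Assuming the doubling gadget from Section~\ref{sec:double} as a black box, the remainder of the argument is a careful but standard composition of bounded-arithmetic gadgets, with book-keeping to ensure that (i) no intermediate configuration exceeds the global bound $B$, enlarging it via Lemma~\ref{lem:increase-bound} if some gadget temporarily needs more room, and (ii) every gadget resets the scratch counter to $0$ before handing control onward, so that a rogue run cannot smuggle extra budget into a later phase and fake acceptance.
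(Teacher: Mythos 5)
Your upper bound is fine and matches the paper's (the paper simply observes that the number of configurations is exponential; your bottom-up saturation over pairs admitting a context is the obvious way to realise this). The lower bound, however, has a genuine gap at the universal-state gadget, which is exactly where all of the technical difficulty of this theorem lives. You propose to duplicate the encoded configuration $c$ across a branching transition by doubling it to $2c$, firing a binary transition, and then ``forcing each child to begin with first counter equal to exactly $c$ by an equality test''. But the only equality tests available in this model (and the only ones simulable by ordinary transitions) are tests against a \emph{constant} fixed at construction time; $c$ is a runtime value, so there is no primitive checking that a child received exactly half of its parent's counter. Since the split $\nn=\mm_1+\mm_2$ at a binary transition is unconstrained, a run can split $2c$ unevenly, and soundness of your simulation of the alternating machine fails.

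Repairing this is precisely the content of the paper's Lemmas~\ref{lemma:xMx} and~\ref{lemma:branch}. Instead of $2c=c+c$, the paper computes $c+Mc$ for a constant $M$ exceeding every possible counter value (itself a nontrivial gadget relying on halving transitions), branches, and then applies only \emph{constant} tests: one child checks that its value is at most $M-1$, the other checks divisibility by $M$ (by halving $\log M$ times) and thereby recovers $c$. Together with $c\le M-1$ these constraints force the split to be $(c,Mc)$, which is the uniqueness your argument needs and does not have. Separately, your choice of source problem (alternating polynomial-space machines rather than countdown games) is legitimate in principle but imports a large amount of unanalysed machinery for reading and writing tape symbols inside a counter; the countdown-game reduction used in the paper avoids all of that because the only data carried through the game is a single decreasing number. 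In short: the two-counter doubling primitive you correctly identify is not the hard part; the hard part is certifying the intended split at a branch, and your proposal does not solve it.
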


The proof of Theorem~\ref{theorem:main} is split into two parts. In Section~\ref{sec:double} we introduce auxiliary models that lie between $\onebobr$ and $\twobobr$. 
Later in Section~\ref{sec:hardness} we prove that these models are EXPTIME-hard.

\section{Extensions of $\onebobr$ with doubling and halving}
\label{sec:double}


We introduce convenient extensions of the models $\onebovass$ and $\onebobr$, where we add the ability of multiplying and dividing the counter value by $2$.
Formally, a {\em{doubling transition}} is a transition of the form $t = (p,q)_{\times 2}$ for some states $p,q$ that has the following semantics: we can write $p(n) \xrightarrow{t} q(m)$ if and only if $m = 2n$.
Similarly, a {\em{halving transition}} is a transition of the form $t = (p,q)_{\div 2}$ for some states $p,q$ that has the following semantics: we may write $p(n) \xrightarrow{t} q(m)$ if and only if $m = n / 2$. Note that in particular, if $n$ is odd then the transition cannot be applied.
Models $\doublebovass$, $\divbovass$, and $\doubledivbovass$ extend $\onebovass$ by respectively allowing doubling transitions, halving transitions, and both doubling and halving transitions.
Models $\divbobr$, $\doublebobr$ and $\doubledivbobr$ extend $\onebobr$ in the same~way.

It is not hard to see that doubling and halving transitions can be simulated using an additional counter, as explained next.

\begin{lemma}
\label{lem:plusone}
For every $\doubledivbovass$ $\V = (Q,\Delta)$ there exists a $\twobovass$ $\V' = (Q',\Delta')$ of polynomial size in $\V$ such that $Q \subseteq Q'$ and $p(n) \to^* q(m)$ iff $p(n,0) \to^* q(m,0)$, for all $p,q\in Q$.
Similarly, for every $\doubledivbobr$ there exists a $\twobobr$ that is analogous as above.
\end{lemma}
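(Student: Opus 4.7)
The plan is to construct $\V'$ by adding a fresh second counter $c_2$ that is always zero at states of $Q$, and using it as scratch space to simulate doubling and halving transitions. Concretely, $Q'$ consists of $Q$ together with the auxiliary states introduced below, and we may take the bound $B'=B$. Every transition of $\V$ that is neither doubling nor halving is copied into $\Delta'$, extended trivially on the second coordinate so that $c_2$ is left unchanged. The invariant we will maintain is that $c_2=0$ whenever the current state lies in $Q$.

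To simulate a doubling transition $(p,q)_{\times 2}$, I introduce two fresh intermediate states $r_1,r_2$ and the following gadget: an edge $p\to r_1$ with zero effect; a self-loop at $r_1$ of effect $(-1,+2)$; a zero-test edge $c_1=0$ from $r_1$ to $r_2$; a self-loop at $r_2$ of effect $(+1,-1)$; and a zero-test edge $c_2=0$ from $r_2$ to $q$. The first loop empties $c_1$ into $c_2$ while doubling, and the second loop transfers the result back into $c_1$, so one verifies directly that $p(n,0)\to^* q(m,0)$ through this gadget iff $m=2n$; the intermediate counter values stay within $\set{0,\ldots,2n}\subseteq\set{0,\ldots,B}$ (since $2n\leq B$ must hold in $\V$). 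A halving transition $(p,q)_{\div 2}$ is simulated symmetrically, with a first loop of effect $(-2,+1)$ that gets stuck exactly when the current value of $c_1$ is odd, followed by an unfolding loop of effect $(+1,-1)$.

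Correctness then follows by a direct induction over runs. In the forward direction, each transition of $\V$ translates into a path in $\V'$ that preserves the invariant $c_2=0$ at $Q$-states. In the backward direction, the auxiliary states of each gadget are reachable only from the designated entry state of that gadget, and the only way to exit is through the endpoint $q$ gated by the zero-test $c_2=0$; hence any run of $\V'$ from $p(n,0)$ to $q(m,0)$ with $p,q\in Q$ factors through $Q$-configurations of the form $r(k,0)$ and decomposes into complete gadget executions, each corresponding to a single transition of $\V$.

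For the branching case the same construction works verbatim: unary transitions are handled as above, and every binary transition $(p,q_1,q_2)\in\Delta_2$ is copied directly as a binary transition of the $\twobobr$ $\V'$. Since branching only occurs at $Q$-states, where $c_2=0$, the split on the second coordinate is forced to be $0+0$, which is consistent with the invariant at the children. The main but entirely routine obstacle is the bookkeeping needed to verify the bound invariant and the gadget-factorisation claim; there is no deeper conceptual difficulty, and in particular no separate appeal to Lemma~\ref{lem:increase-bound} is required.
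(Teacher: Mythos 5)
Your construction is exactly the gadget used in the paper: fresh states $r_1,r_2$, a transfer loop of effect $(-1,+2)$ followed by a zero test on $c_1$, an unfolding loop of effect $(+1,-1)$ followed by a zero test on $c_2$, and the symmetric $(-2,+1)$ gadget for halving, with binary transitions copied verbatim. The proposal is correct and matches the paper's proof (which states the same gadgets more tersely); your additional remarks on the bound and the backward-direction factorisation are sound.
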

\begin{proof}
Every doubling transition $(p,q)_{\times 2}$ can be simulated using the extra counter, two additional states $r_1, r_2 \not \in Q$ and five transitions: $t_1 = (p, (0,0), r_1)$, $t_2 = (r_1,(-1,2),r_1)$, $t_3 = (r_1,r_2)_{c_1 = 0}$, $t_4 = (r_2,(1,-1),r_2)$, $t_5 = (r_2,q)_{c_2 = 0}$.
Similarly, every halving transition $(p,q)_{\times 2}$ can be simulated using the extra counter, two additional states $s_1, s_2 \not \in Q$ and five transitions: $t_1 = (p, (0,0), s_1)$, $t_2 = (s_1,(-2,1),s_1)$, $t_3 = (s_1,s_2)_{c_1 = 0}$, $t_4 = (s_2,(1,-1),s_2)$, $t_5 = (s_2,q)_{c_2 = 0}$.
\end{proof}

Therefore, $\doubledivbobr$ is a model in between $\onebobr$ and $\twobobr$. This model was inspired by the \emph{more powerful one register machine} (MP1RM)~\cite{schroeppel1972two}, used to prove the limitations of expressiveness for two-counter machines\footnote{The amazing name and abbreviation of the model come from~\cite{schroeppel1972two}.}. It is a one counter machine (without bounds) extended with two operations: multiplying by constants and dividing by constants. A more general model with polynomial updates was also studied in~\cite{FinkelGH13}.

By Lemma~\ref{lem:plusone}, the reachability problem for $d$-$\bobr$ for any $d\geq 2$ is at least as hard as the reachability problem for $\doubledivbobr$.
In the next section we will prove that the latter problem is already EXPTIME-hard, which will conclude the proof of Theorem~\ref{theorem:main}.
Before this, we show a side result which essentially says that doubling can be implemented using halving and vice versa.

\begin{proposition}
\label{proposition:double_is_division}
The reachability problems for models $\doubledivbovass$, $\doublebovass$ and $\divbovass$ are equivalent with respect to polynomial time reductions. 
Similarly, the reachability problems for models $\doubledivbobr$, $\doublebobr$ and $\divbobr$ are equivalent with respect to polynomial time reductions.
\end{proposition}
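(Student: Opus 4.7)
The plan is to prove polynomial-time equivalence of the three models through mutual reductions, for both the \bovass and \bobr variants. The easy directions---\doublebovass and \divbovass reducing to \doubledivbovass, and analogously for the branching versions---are just identity embeddings, so I focus on the opposite directions.

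The first tool is a \emph{reversal construction}. Given a system $\V$, I form $\V^{\mathrm{op}}$ by reversing each transition: an addition $(p,\zz,q)$ becomes $(q,-\zz,p)$; a doubling $(p,q)_{\times 2}$ becomes a halving $(q,p)_{\div 2}$; and a halving becomes a doubling symmetrically. Tests are reversed in direction while preserving their condition. For the non-branching case, a run corresponds bijectively to its reverse, so $p(n)\to^* q(m)$ in $\V$ if and only if $q(m)\to^* p(n)$ in $\V^{\mathrm{op}}$. Since reversal swaps doublings and halvings, the mapping $\V\mapsto \V^{\mathrm{op}}$ witnesses the equivalence of \doublebovass and \divbovass. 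For the \bobr case, I adapt the construction to context trees: a context tree is rerooted at the distinguished leaf and each transition is reoriented, with binary transitions $(p,q_1,q_2)$ retaining the sum-of-children relation after relabeling the new root. The former $q_0(\vzero)$ leaves are handled via an auxiliary starting state in $\V^{\mathrm{op}}$. With these adaptations, one obtains the equivalence of \doublebobr and \divbobr.

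The remaining reduction is \doubledivbovass to \doublebovass (and \doubledivbobr to \doublebobr). The idea is to substitute each halving transition $(p,q)_{\div 2}$ by a gadget using only additions and doublings. The gadget nondeterministically guesses the result $m=n/2$ and then verifies the identity $n=2m$ via a doubling transition. In the \bobr case, the verification naturally uses a spawned subtree as auxiliary storage for the guessed value and its double, with a branching transition enforcing the split. In the \bovass case, the verification has to be encoded within the state space: I intend to use the bound $B$ and a calibrated chain of transitions to compare the original value with twice the guess without needing an extra counter.

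The main obstacle is the \doubledivbovass to \doublebovass reduction in the non-branching case: simulating halving using only doubling and a single counter is not straightforward, and the technical core of the proposition lies in designing a polynomial-size gadget that performs this simulation within the bound $B$. The branching case is more transparent since a spawned subtree provides the required memory, but the non-branching case requires a delicate use of the bound and state space to implement the comparison.
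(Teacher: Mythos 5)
Your reduction of \doublebovass{} and \divbovass{} to \doubledivbovass{} (and the branching analogues) by identity embedding is fine, and the reversal construction is a correct, rather elegant reduction between \doublebovass{} and \divbovass{} in the \emph{non-branching} case (a reversed bounded run is again a bounded run, and reversal swaps $\times 2$ with $\div 2$). However, there is a genuine gap at what you yourself call the technical core: the reduction from \doubledivbovass{} to \doublebovass{}. Your ``guess $m$ and verify $n=2m$'' scheme cannot work as described with a single bounded counter: once you overwrite the counter with the guess $m$, the original value $n$ (which can be exponential in the input size) is lost, and a polynomial-size state space cannot remember it, so there is nothing left to compare $2m$ against. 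No ``calibrated chain of transitions'' is exhibited, and it is precisely here that a real idea is needed. The paper's solution is to first use Lemma~\ref{lem:increase-bound} to normalise the bound to $B=2^n-1$, and then observe that the operation ``if $c<2^{n-1}$ double $c$, else subtract $2^{n-1}$, double, add $1$'' is a cyclic left rotation of the $n$-bit encoding of the counter; applying it $n-1$ times is a cyclic right rotation, i.e.\ halving, once one finally checks that the bit that wrapped around (the original lowest bit) is $0$. The symmetric construction with halving transitions implements doubling. This is entirely local to unary transitions, so it covers the branching models for free.

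Two further remarks. First, even granting the reversal trick, your plan still needs the \doubledivbovass{}-to-\doublebovass{} reduction (reversal maps the combined model to itself), so the gap above is not circumvented. Second, the extension of the reversal to \bobr{} contexts is more delicate than you suggest: rerooting a context at its distinguished leaf turns a binary node labelled $p(\nn)$ with children $q_1(\mm_1)$, $q_2(\mm_2)$, $\nn=\mm_1+\mm_2$, into a node whose child carries a \emph{larger} vector than its parent, and it turns the $q_0(\vzero)$ leaves of the side subtrees into would-be roots; the result is not a context of a $\bvass$ in the paper's sense, so this step would require a genuinely new argument rather than relabelling. The paper avoids all of this by making both reductions go from \doubledivbovass{} outward via the rotation gadget.
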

\begin{proof}
We will give polynomial-time reductions from model $\doubledivbovass$ to models $\doublebovass$ and $\divbovass$. The converse reductions are trivial, 
and the result for models with branching ($\doubledivbobr$, $\doublebobr$ and $\divbobr$) follows by applying the same construction.

We start with the reduction from $\doubledivbovass$ to $\doublebovass$.
Consider a $\doubledivbovass$ $\B$, say with an upper bound $B$.
By applying Lemma~\ref{lem:increase-bound}, we may assume that $B+1$ is a power of~$2$, say $B=2^n-1$ for some positive integer $n$.

The idea is to simulate every halving transition $(u,v)_{\div 2}$ in $\B$ with a sequence of $O(n)$ transitions (standard or doubling); since $n$ is polynomial in the size of $\B$, this is fine for a polynomial-time reduction. We explain this sequence in words; translating this description into a formal definition of a system and adding appropriate states and transitions to $\B$ is straightforward.
Apply $n-1$ times the following operation (depicted on Figure~\ref{figure:shiftleft}): if the counter value is smaller than $2^{n-1}$ then just double the counter; and otherwise subtract $2^{n-1}$ from the counter, double it and then add $1$.
Finally, at the end test whether the counter value is smaller than $2^{n-1}$.

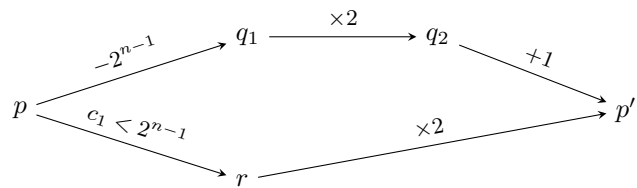
\begin{figure}[!ht]
\centering
\begin{tikzpicture}[node distance = 2.5cm]
\node (p) {$p$};
\node[above right = 0.5cm and 2.5cm  of p] (q1) {$q_1$};
\node[below right = 0.5cm and 2.5cm of p] (r) {$r$};
\node[right of = q1] (q2) {$q_2$};
\node[right = 7.5cm of p] (pp) {$p'$};

\path
(p) edge[->,>=stealth] node[sloped, above] {\footnotesize{$- 2^{n-1} $}} (q1)
(p) edge[->,>=stealth] node[sloped, above] {\footnotesize{$c_1 < 2^{n-1} $}} (r)
(q1) edge[->,>=stealth] node[sloped, above] {\footnotesize{$\times 2$}} (q2)
(q2) edge[->,>=stealth] node[sloped, above] {\footnotesize{$+1$}} (pp)
(r) edge[->,>=stealth] node[sloped, above] {\footnotesize{$\times 2$}} (pp)
;
\end{tikzpicture}
\caption{Shifting the bits cyclically to the left.}\label{figure:shiftleft}
\end{figure}

To see that these operations correctly implement halving the counter value, consider its bit encoding. 
Each single operation described above amounts to cyclically shifting the bit encoding to the left by $1$: every bit is moved to a position one higher, apart from the bit from the highest ($n$th) position which is moved to the lowest position. Thus, after $n-1$ applications the initial bit encoding is cyclically shifted by $n-1$ to the left, which is equivalent to shifting it by $1$ to the right. So now it suffices to verify that the highest bit (which was the lowest in the encoding of the initial counter value) is $0$, and if this is the case, then the final counter value is equal to half of the initial counter value.

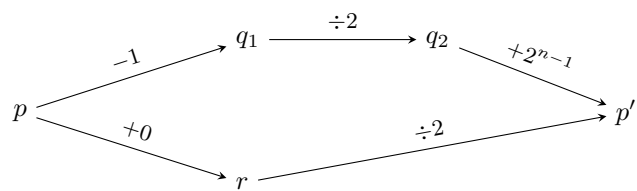
\begin{figure}[!ht]
\centering
\begin{tikzpicture}[node distance = 2.5cm]
\node (p) {$p$};
\node[above right = 0.5cm and 2.5cm  of p] (q1) {$q_1$};
\node[below right = 0.5cm and 2.5cm of p] (r) {$r$};
\node[right of = q1] (q2) {$q_2$};
\node[right = 7.5cm of p] (pp) {$p'$};

\path
(p) edge[->,>=stealth] node[sloped, above] {\footnotesize{$-1$}} (q1)
(p) edge[->,>=stealth] node[sloped, above] {\footnotesize{$+ 0$}} (r)
(q1) edge[->,>=stealth] node[sloped, above] {\footnotesize{$\div 2$}} (q2)
(q2) edge[->,>=stealth] node[sloped, above] {\footnotesize{$ + 2^{n-1}$}} (pp)
(r) edge[->,>=stealth] node[sloped, above] {\footnotesize{$\div 2$}} (pp)
;
\end{tikzpicture}
\caption{Shifting the bits cyclically to the right.}\label{figure:shiftright}
\end{figure}

The reduction from $\doubledivbovass$ to $\divbovass$ follows by the same reasoning, but reversed.
That is, we have to implement every doubling transition $(u,v)_{\times 2}$ in $\B$ using a sequence of $O(n)$ transitions (standard or halving).
First, we verify that the counter value is smaller than $2^{n-1}$, for otherwise the doubling transition cannot be applied.
If this is the case, we cyclically shift the binary encoding of the counter value by $1$ to the right $n-1$ times. As Figure~\ref{figure:shiftright} shows,
each such shift is performed by first nondeterministically guessing whether the counter value is odd, subtracting $1$ if this is the case, halving, and adding $2^{n-1}$ to the counter value if the counter value was odd.
\end{proof}

\section{Exptime-hardness}
\label{sec:hardness}

This section is devoted to prove EXPTIME-completness of $\doubledivbobr$. Note that Theorem~\ref{theorem:exptime_hardness} below together with Lemma~\ref{lem:plusone} immediately give Theorem~\ref{theorem:main}.

\begin{theorem}
\label{theorem:exptime_hardness}
The reachability problem for $\doubledivbobr$ is EXPTIME-complete.
\end{theorem}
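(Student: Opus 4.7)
The EXPTIME upper bound is routine: a $\doubledivbobr$ has at most $|Q|\cdot(B+1)$ configurations, which is singly exponential in the input size, and AND/OR reachability in a graph of size $N$ is solvable in time polynomial in $N$, giving EXPTIME overall.

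For the matching lower bound I would reduce from the word problem for alternating polynomial-space Turing machines (APSPACE $=$ EXPTIME). Given an ATM $M$ with an $n$-cell binary tape, state set partitioned into $P_\exists\cup P_\forall$, and input $w\in\set{0,1}^n$, the plan is to build a $\doubledivbobr$ $\B$ together with a source state $p_{\text{init}}$ so that $M$ accepts $w$ if and only if $p_{\text{init}}(v_w)\to^* q_0(\vzero)$ in $\B$. The encoding keeps the tape content of $M$ in the counter of $\B$ as a binary integer, and keeps the pair (ATM state, head position) in the control state of $\B$; by Lemma~\ref{lem:increase-bound} I fix the bound $B=2^{2n}-1$, so that auxiliary computations using up to $2n$ bits fit comfortably and cyclic rotations wrap correctly.

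Each deterministic or existential step of $M$ at head position $h$ would be simulated by a polynomial-size gadget that uses the cyclic-shift primitive from the proof of Proposition~\ref{proposition:double_is_division}: rotate the counter to bring the head bit to position $0$; read it by a nondeterministic guess confirmed with a halving transition (which is enabled precisely when the value is even) or a decrement-then-halve when the guess is $1$; update it by a conditional $\pm 1$; rotate back; and finally change the control state to the new (ATM state, head position). Existential moves of $M$ become nondeterministic choices among unary transitions in $\B$. When $M$ enters its accepting state, a cascade of at most $2n$ halvings (each preceded by $-1$ whenever the value is odd) zeroes the counter, after which $\B$ transitions to the axiom state $q_0$, closing off that leaf of the context at $q_0(\vzero)$.

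The genuinely delicate part is simulating a universal step of $M$, because a binary $\bobr$ transition forces $\nn=\mm_1+\mm_2$ rather than duplicating the counter, so the two children cannot a priori both inherit the full tape. I plan a ``pack-and-verify'' construction. If the two universal successor tapes are $v_1,v_2\in\set{0,\ldots,2^n-1}$ (each differing from the current tape $v$ only at the head bit), the gadget first deterministically transforms the counter from $v$ into the packed value $v_1\cdot 2^n+v_2$; the binary transition is then fired; afterwards the left child verifies that its share is divisible by $2^n$ via $n$ successive halvings, and the right child verifies that its share is strictly below $2^n$ via an explicit upper-bound test. These two constraints jointly force the unique split $(\mm_1,\mm_2)=(v_1\cdot 2^n,v_2)$, so each child recovers its intended universal successor and the simulation continues independently. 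The obstacle I expect to dominate the work is implementing the packing step itself, since it amounts to multiplying the counter by the non-power-of-two constant $2^n+1$ with only a single counter available; my plan is to realize this multiplication by a polynomially long sequence combining the cyclic-rotation primitives of Proposition~\ref{proposition:double_is_division} with auxiliary branching transitions that ``broadcast'' the bits of $v$ from the low half of the counter into the high half one at a time, each auxiliary branching followed by the same kind of divisibility/upper-bound verification that pins down the split. Granting a correct packing gadget, a straightforward induction on the depth of the alternating computation tree of $M$ yields a bijection between accepting ATM computations on $w$ and contexts of $\B$ from $p_{\text{init}}(v_w)$ to $q_0(\vzero)$, completing the reduction.
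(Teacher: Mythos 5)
Your upper bound argument is fine and matches the paper's. For the lower bound, the overall architecture you propose is sound, and at the level of the universal-step gadget it coincides with the paper's key idea (its Lemma~\ref{lemma:branch}): pack the value to be duplicated into a number of the form $v_1M+v_2$ with $M=2^n$, fire the branching transition, and force the unique split by checking divisibility by $M$ in one child (via $n$ halvings) and an upper bound $<M$ in the other. The paper, however, reduces from countdown games rather than from alternating PSPACE machines, which lets it discard all of your tape-encoding machinery: the counter only ever holds the countdown value, existential and universal moves are plain subtractions, and the \emph{only} nontrivial gadget needed is the copy gadget.

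The genuine gap is exactly where you say you expect it: the packing step. Producing $v(2^n+1)$ (equivalently $v_1 2^n+v_2$) from $v$ is the technical heart of the entire hardness proof --- it is the paper's Lemma~\ref{lemma:xMx} --- and you do not construct it; you explicitly ``grant'' it. Worse, the one idea you offer for it is circular: you propose to broadcast the bits of $v$ into the high half using ``auxiliary branching transitions,'' but a branching transition \emph{splits} the counter value between two subtrees that then evolve independently and can never be merged back, and it conserves the total value, so it can neither duplicate information nor create the extra $v\cdot 2^n$ you need; the impossibility of copying a value across a branch is precisely the obstacle the packing gadget exists to overcome. There are non-circular ways to close the gap: the paper's Lemma~\ref{lemma:xMx} nondeterministically guesses the multiplier, adding $a(M+M^2+M^3)$ and subtracting $b(1+M^3)$, and uses the global bound, an upper-bound test, and divisibility by $M$ to force $a=b=x$; alternatively, one can copy $v$ bit by bit using only your rotation-and-parity reading trick, with no branching at all. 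But as written your proof is missing its central construction, and the mechanism you sketch for it would not work.
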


We start by proving two lemmas that will be helpful in the reduction. Afterwards we define countdown games, an EXPTIME-complete problem from which we reduce. In the next two claims we write $M$ for some power of $2$, \ie $M = 2^n$ for some $n \in \N$.

\begin{lemma}
\label{lemma:xMx}
Let $f\colon \set{0,1,\ldots, M-1} \to \set{0,1,\ldots, M -1 + M(M-1)}$ be defined as $f(x) = x + Mx$. 
Then $f$ is computable by a $\doubledivbovass$ of size polynomial in the representation of $M$, with an upper bound $B = M + M(M-1) + M^2(M-1) + M^3(M-1)$.
\end{lemma}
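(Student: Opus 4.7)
My plan is to simulate a bit-by-bit ``shift-and-add'' multiplication of $x$ by $M+1$ inside a single counter. Writing $n = \log_2 M$, the $\doubledivbovass$ will consist of $n+1$ main states $q_0, q_1, \ldots, q_n$, plus $O(1)$ auxiliary states per iteration, giving total size $O(n)$, which is polynomial in the bit length of $M$.

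The invariant I would maintain is that at state $q_i$ the counter equals
\[
  c_i \;=\; a_i + 2^{n-i}\, y_i,
\]
where $a_i = \lfloor x/2^i \rfloor$ collects the bits of $x$ not yet processed and $y_i = \sum_{j<i} b_j\,(2^j + 2^{n+j})$, with $b_j$ denoting the $j$-th bit of $x$, is the partial accumulator for $f(x)$. Observe that $c_0 = x$ and $c_n = y_n = x + Mx = f(x)$, which are precisely the required endpoints.

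The iteration gadget going from $q_i$ to $q_{i+1}$ exploits that, since $2^{n-i} y_i$ is even for $i < n$, the parity of $c_i$ coincides with the next bit $b_i$ of $x$. It offers two nondeterministic branches of which exactly one is enabled at the current counter value: if $c_i$ is even one simply halves the counter, and if $c_i$ is odd one subtracts $1$, halves, and adds the constant $2^{n-1} + 2^{2n-1}$. A direct calculation verifies that in both cases the new counter value equals $a_{i+1} + 2^{n-(i+1)}\, y_{i+1}$, preserving the invariant; uniqueness of the run on each valid input then follows because the parity constraint forces the choice of branch. Crucially, the additive constant $2^{n-1}+2^{2n-1}$ does not depend on $i$, so every iteration uses the same gadget attached to different ``phase'' states, keeping the construction polynomial-sized. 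To finish the system I would prepend an initial inequality test $c_1 \le M-1$ blocking inputs outside the declared domain of $f$, as allowed by the remark after Definition~\ref{definition:compute}.

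The main obstacle is that $M+1$ is not a power of $2$: since any deterministic sequence of $\times 2$, $\div 2$ and constant additions realises only an affine map $x \mapsto 2^k x + c$, multiplication by $M+1$ genuinely \emph{requires} branching on the bits of $x$, and those bits must be read without destroying the information needed for the output. The key trick is the choice of packing $a + 2^{n-i} y$ whose scale $2^{n-i}$ itself halves at each iteration, so that a single halving both advances the bit pointer in $a$ and leaves $y$ unchanged. Once this is in place, bounding the counter becomes routine: one shows $c_i \le 2^{n-i} + (M+1)(2^i-1)\cdot 2^{n-i} \le M^2-1$ at every iteration, and the intermediate value reached inside the ``odd'' branch exceeds $c_i$ by at most $2^{2n-1}+2^{n-1}$, so all counter values remain comfortably below $B = M^4$.
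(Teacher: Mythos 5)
Your proof is correct, but it takes a genuinely different route from the paper's. The paper computes $f$ by a guess-and-verify scheme: a loop at the initial state adds $M+M^2+M^3$ some number $a$ of times (so the counter holds $x+a(M+M^2+M^3)$, with $a\le M-1$ enforced by the global bound), a second loop subtracts $1+M^3$ some number $b$ of times, the inequality test $c_1\le M^3+M^2$ forces $b=a$, and the final $n$ halvings act purely as a divisibility-by-$M$ test that forces $a=x$ while scaling $Mx+M^2x$ down to $x+Mx$. Your construction instead performs a deterministic binary shift-and-add: you exploit the fact that a halving transition is disabled on odd values as a parity oracle to read the bits of $x$ one at a time, packing the partial product into the high-order part of the same counter via the invariant $c_i=\lfloor x/2^i\rfloor+2^{n-i}y_i$, with the pleasant feature that the additive constant $2^{n-1}+2^{2n-1}$ is the same in every round. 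Both are polynomial-size $\doubledivbovass$ constructions and both respect the stated bound $B$, which telescopes to exactly $M^4$; your version buys determinism (exactly one branch of each gadget is enabled, so the ``only $f(x)$ is reachable'' half of Definition~\ref{definition:compute} is immediate) and a much smaller working range, at the cost of a more delicate invariant, whereas the paper's version is shorter to state and uses halving only once, as a divisibility check. Two minor arithmetic slips, neither of which affects the conclusion: your bound on $c_i$ should be $c_i\le 2^{n-i}+(M+1)(2^n-2^{n-i})\le M+M^2-1$ rather than $M^2-1$ (the term $a_i\le 2^{n-i}$ got absorbed), and the peak value in the odd branch is attained after the final addition, namely $(c_i-1)/2+2^{n-1}+2^{2n-1}\le M^2+M$, rather than ``$c_i$ plus the constant''; everything still stays far below $B=M^4$.
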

\begin{proof}
The set of states is $Q = \set{p,r,q_0,q_1,\ldots,q_n}$. The transitions are depicted in Figure~\ref{figure:xMx}.

\begin{figure}[!ht]
\begin{tikzpicture}[node distance = 3cm]
\node (p) {$p$};
\node[right of = p] (r) {$r$};
\node[right of = r] (q1) {$q_0$};
\node[right of = q1] (dots) {$\ldots$};
\node[right of = dots] (qn) {$q_n$};

\path
(p) edge[->,loop above,>=stealth] node[above] {\footnotesize{$M+M^2+M^3$}} (p)
(p) edge[->,>=stealth] node[above] {\footnotesize{$0$}} (r)
(r) edge[->,loop above,>=stealth] node[above] {\footnotesize{$-1-M^3$}} (r)
(r) edge[->,>=stealth] node[above] {\footnotesize{$c_1 \le M^3 + M^2$}} (q1)
(q1) edge[->,>=stealth] node[above] {\footnotesize{$\div 2$}} (dots)
(dots) edge[->,>=stealth] node[above] {\footnotesize{$\div 2$}} (qn)
;
\end{tikzpicture}
\caption{The $\doubledivbovass$ computing $f(x) = x + Mx$.}\label{figure:xMx}
\end{figure}

We now prove that the system defined above indeed computes $f$. Consider any $x\in \set{0,1,\ldots,M-1}$.
Observe that
$p(x) \to^* r(y)$ if and only if $y=x + Ma + M^2a + M^3a - (M^3 + 1)b $ for some $0 \le b\le a \le M-1$ ($a$ is the number of loops taken in $p$ and $b$ is the number of loops taken in $r$).
Then observe that
$r(x + Ma + M^2a + M^3a) \to^* q_0(y)$ if and only if $y=x-a + Ma + M^2a$.
Finally, note that it is possible to go from $q_0$ to $q_n$ only if the initial counter value is divisible by $M$. 
Since $|x-a| < M$, this is possible only if $x = a$. Then we get $q_0(Mx + M^2x) \to^* q_n(y)$ if and only if $y=x + Mx$,
so in total $p(x)\to^* q_n(y)$ if and only if $y=x+Mx$, as required.
\end{proof}

We now combine the system provided by Lemma~\ref{lemma:xMx} with the branching feature of the $\bobr$ model to implement the key functionality: copying the counter value into two separate branches.

\begin{lemma}\label{lemma:branch}
There is a $\doubledivbobr$ of size polynomial in the representation of~$M$, with an upper bound as in Lemma~\ref{lemma:xMx}, and distinguished states $p,q_1,q_2$, with the following property.
For every $x\in \set{0,1,\ldots,M-1}$ there is a unique partial run $\rho_x$ of the system whose root is labelled with configuration $p(x)$ and whose all leaves have states $q_1$ or $q_2$.
Moreover, $\rho_x$ has exactly two leaves: one with configuration $q_1(x)$ and second with configuration $q_2(x)$.
\end{lemma}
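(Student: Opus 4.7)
The plan is to use Lemma~\ref{lemma:xMx} as a black box to prepare the value $x + Mx$, and then use a single binary branching transition to split this value ``digit-wise'' into the two branches. Intuitively, $x + Mx$ stores two copies of $x$ in two separate base-$M$ digits; a branching transition together with appropriate tests on each side should force the split to recover exactly $x$ in both branches.

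Concretely, I would take the $\doubledivbovass$ $\V$ of Lemma~\ref{lemma:xMx} with initial state $p$ and final state $p'$ (the state called $q_n$ in that construction), and extend it to a $\doubledivbobr$ by adding: a binary branching transition $(p', s_1, s_2)$ with two fresh states $s_1, s_2$; a test $c_1 \leq M-1$ from $s_1$ leading to the new leaf state $q_1$; and from $s_2$, a chain of $n$ halving transitions through $n-1$ new intermediate states, followed by a test $c_1 \leq M-1$ leading to the new leaf state $q_2$. This only adds $O(n)$ states and transitions to $\V$, keeping the total size polynomial in the representation of $M$. After the branching, all counter values stay below $M^2$, so the upper bound inherited from Lemma~\ref{lemma:xMx} still applies.

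The main obstacle is proving that there is a \emph{unique} partial run $\rho_x$ from $p(x)$ whose leaves all carry state $q_1$ or $q_2$. First, the unary run inside $\V$ from $p(x)$ to $p'(x + Mx)$ is unique: the analysis inside the proof of Lemma~\ref{lemma:xMx} shows that the divisibility required by the halving chain ending at $q_n$, combined with the test $c_1 \leq M^3 + M^2$ at $q_0$ and the global bound, forces the number of $p$-loops and the number of $r$-loops to be exactly $x$ each. Second, consider the binary transition at $p'$: it produces children $s_1(a)$ and $s_2(b)$ with $a+b = x + Mx$. In the right branch, the $n$ halvings force $b$ to be divisible by $2^n = M$, so $b = kM$ and hence $a = x + (x-k)M$; the test $c_1 \leq M-1$ in the left branch then forces $a \in \{0, \ldots, M-1\}$, and the only integer $k$ compatible with this is $k = x$. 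Thus $a = x$ and $b = Mx$, and the deterministic tails in the two branches yield precisely the required leaves $q_1(x)$ and $q_2(x)$.
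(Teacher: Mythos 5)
Your construction and argument coincide with the paper's proof: prepare $x+Mx$ via Lemma~\ref{lemma:xMx}, branch at $p'$, enforce $c_1\leq M-1$ on one side and divisibility by $M=2^n$ via $n$ halvings on the other, and conclude that the split must be $(x,Mx)$. The only differences are cosmetic (an extra, redundant test before $q_2$ and a slightly more explicit uniqueness argument for the unary prefix), so this is essentially the same proof.
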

\begin{proof}
We start the construction with the $\doubledivbovass$ provided by Lemma~\ref{lemma:xMx}, hence we can assume that there is a state $p'$ such that for all $x\in \set{0,1,\ldots,M-1}$, $p(x)\to^* p'(y)$ if and only if $y=x+Mx$. 
Next, add states $\set{r, q_1, s_0, \ldots, s_{n-1}, q_2}$ to the system together with the following transitions:
\begin{itemize}
\item branching transition $(p',r,s_0)$;
\item inequality test $(r,q_1)_{c_1\leq M-1}$; and
\item halving transitions $(s_i,s_{i+1})_{\div 2}$ for all $i\in \set{0,1,\ldots,n-1}$, where $s_n=q_2$.
\end{itemize}
We now argue that this system has the required property.
Observe that starting from configuration $p(x)$, the system first has to move to configuration $p'(x+Mx)$, where it branches into configurations $r(y)$ and $s_0(z)$ with $y+z=x+Mx$.
In one branch we test that $y\leq M-1$ and end in configuration $q_1(y)$. In the other branch we divide $z$ by $2$ exactly $n$ times, which amounts to testing that $z$ is divisible by $M$ and finishing in configuration $q_2(z/M)$. Since $y\leq M-1$, $z$ is divisible by $M$, and $x\in \set{0,1,\ldots,M-1}$, we conclude that the split $(y,z)$ of $x+Mx$ has to be equal to $(x,Mx)$.
Hence, the only two leaves are labelled with configurations $q_1(x)$ and $q_2(x)$, respectively.
\end{proof}

We will now show a reduction from a classic EXPTIME-complete problem --- determining a winner of a countdown game --- to reachability in $\doubledivbobr$. A {\em{countdown game}} is a pair $(S,T)$ such that $S$ is a set of nodes partitioned into two subsets $S = S_\exists \uplus S_\forall$, and $T \subseteq S \times (\N \setminus \set{0}) \times S$ are weighted transitions, with weights encoded in binary.
We can assume that for every node $s$ there exists exactly two outgoing transitions $(s,n,s')$ for some integer $n$ and node $s'$.
A configuration of a game is $s(c)$, where $s \in S$ and $c \in \N$.

The game has two players, each owning her set of nodes (Player~$q$ owns $S_q$, for $q\in \{\exists,\forall\}$). 
The game proceeds as follows: for every configuration $s(c)$ the owner of the node $s$ chooses one of the two transitions $(s,n,s')$ such that $n \le c$ and moves to the configuration $s'(c - n)$.
If it is not possible to choose such a transition, then the game ends and Player~$\forall$ wins the game.
Otherwise, if at some point a configuration with value~$0$ is reached, then the game ends and Player~$\exists$ wins the game.

Given a countdown game $(S,T)$ and a starting configuration $s(c)$ the problem whether Player~$\exists$ has a winning strategy is known to be EXPTIME-complete~\cite{JurdzinskiSL08}\footnote{The game considered in~\cite{JurdzinskiSL08} is slightly different but it is easy to show that both definitions are equivalent.}. Without loss of generality we can assume that $c = 2^n-1$ for some integer $n$.

\begin{proof}[Proof of Theorem~\ref{theorem:exptime_hardness}]
We give a reduction from the problem of determining the winner of a countdown game.
Let then $(S,T)$ be the given countdown game and $s(c)$ be its starting configuration, where $c=2^n-1$ for some integer $n$.
We are going to construct, in polynomial time, a $\doubledivbobr$ $\B$ together with a root configuration such that $\B$ has an accepting run from the root configuration (\ie one where all leaves are labelled with the initial configuration $q_0(0)$) if and only if Player~$\exists$ wins the game.

We start by including all nodes of $S$ in the state set of $\B$.
For every node $r\in S_\exists$, let $(r,n_1,r_1')$ and $(r,n_2,r_2')$ be the two transitions in $T$ that have $r$ as the origin.
Then add transitions $(r,-n_1,r_1')$ and $(r,-n_2,r_2')$ to the transition set of $\B$.

Next, for every node $r\in S_\forall$, let $(r,n_1,r_1')$ and $(r,n_2,r_2')$ be the two transitions in $T$ that have $r$ as the origin.
Then add a copy of the system provided by Lemma~\ref{lemma:branch} for $M=c+1$, and identify its distinguished state $p$ with $r$.
Denoting the other two distinguished states of the copy by $q_1^r$ and $q_2^r$, add transitions $(q_1^r,-n_1,r_1')$ and $(q_2^r,-n_2,r_2')$ to $\B$.

Finally, we add an initial state $q_0$ together with a transition $(r,0,q_0)$ for each $r\in S$.
Thus, $\B$ may accept whenever it reaches any state $r\in S$ with counter value $0$.
This finishes the construction of $\B$. We set the root configuration to be $s(c)$.

To see that the reduction is correct it suffices to observe that accepting runs of $\B$ with root configuration $s(c)$ are in one-to-one correspondence with winning strategies of Player~$\exists$ in the game $(S,T)$ starting from $s(c)$.
Indeed, in nodes belonging to Player~$\exists$, she may choose any of the two transitions originating there, which corresponds to the possibility of taking any of the two transitions in an accepting run of $\B$.
In nodes belonging to Player~$\forall$, Player~$\exists$ has to be able to win for both possible moves of Player~$\forall$, which corresponds to requiring acceptance in both subtrees obtained as a result of running the respective copy of the system provided by Lemma~\ref{lemma:branch}.
\end{proof}

\section{Limitations of $\onebobr$}
\label{sec:limitations}

The natural approach to show EXPTIME hardness for $\onebobr$ is to ask whether they can efficiently compute the function $f : \set{0,\ldots,M} \to \set{0,\ldots, 2M}$ defined by $f(x) = 2x$. 
Then $\onebobr$ could simulate $\doublebobr$ and EXPTIME-hardness would follow from Theorem~\ref{theorem:exptime_hardness} and Proposition~\ref{proposition:double_is_division}.
Unfortunately, we show that this is not the case. In the following proposition, we assume the original definition of $\onebobr$, without inequality or equality tests as single transitions.

\begin{proposition}
\label{proposition:no_double}
Consider the function $f \colon \set{0,\ldots,M} \to \set{0,\ldots, 2M}$ defined as $f(x) = 2x$, for some $M \in \N$. 
Then every $\onebobr$ computing the function $f$ has at least $M$ states.
\end{proposition}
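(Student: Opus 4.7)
I would argue by contradiction, assuming there is a $\onebobr$ $\B=(Q,\Delta_1,\Delta_2,q_0,B)$ with $|Q|<M$ computing $f$. For each input $x\in\set{0,1,\ldots,M}$, fix a context $\rho_x$ of minimum size witnessing $p(x)\to^* q(2x)$ and consider its \emph{main spine} $\pi_x$: the path in $\rho_x$ from the root, labelled $p(x)$, down to the distinguished leaf, labelled $q(2x)$. Write $(s_0,c_0),\ldots,(s_L,c_L)$ for the configurations encountered along $\pi_x$, so that $s_0=p$, $s_L=q$, $c_0=x$, and $c_L=2x$. Each step either applies a unary transition (adding a fixed integer to the counter) or a branching transition (splitting the counter between the spine and a discharge subtree hanging off a sibling state).

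The central technical step is a \emph{cut-and-shift} pumping lemma on the spine. Suppose some state recurs, say $s_i=s_j$ with $i<j$, and let $\delta=c_j-c_i$. Excise from $\rho_x$ all spine nodes strictly between positions $i$ and $j$ together with the discharge subtrees attached to them, and reroute the transition that originally left $s_j$ so that it now leaves the node at position $i$. Uniformly shifting the counter values along the remaining tail by $-\delta$ produces a partial run rooted at $p(x)$ whose distinguished leaf is $q(2x-\delta)$. Because the sibling discharge value $c_k-c_{k+1}$ at each tail branching is invariant under this uniform shift, every untouched discharge subtree remains valid, and one can ensure the shifted counters lie in $\set{0,\ldots,B}$ by a careful choice of $(i,j)$ --- e.g., when $\delta>0$, picking $j$ so that $c_j$ is minimal on the suffix guarantees $c_k\ge c_j\ge\delta$ for all $k\ge j$, and the case $\delta<0$ is symmetric. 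If $\delta=0$, the cut yields a strictly smaller context reaching $q(2x)$, contradicting minimality of $\rho_x$; if $\delta\ne 0$, it witnesses $p(x)\to^* q(2x-\delta)$, contradicting the assumption that $q(2x)$ is the unique $q$-labelled configuration reachable from $p(x)$. Either way we reach a contradiction, so no state can recur on $\pi_x$, and in particular $|\pi_x|\le |Q|$.

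The concluding step combines this spine-length bound with the requirement that, as $x$ ranges over $M+1$ distinct values, the corresponding spines $\pi_x$ realise $M+1$ pairwise distinct net counter changes $c_L-c_0=x$. Since each spine selects at most $|Q|$ states from $Q$ together with the outgoing transitions between them, and since the functional hypothesis on $\B$ forces the branching diversions along $\pi_x$ to encode the input $x$ uniquely, a pigeonhole argument over the inputs shows that $|Q|<M$ cannot accommodate all $M+1$ distinct computations; hence $|Q|\ge M$.

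The principal obstacles are twofold. First, the bookkeeping in the cut-and-shift pumping must carefully handle the discharge subtrees attached to the excised portion of the spine and establish that a pair $(i,j)$ making the shifted tail valid exists whenever the spine has length exceeding $|Q|$. Second, converting the qualitative ``no state recurs on the spine'' conclusion into the sharp quantitative bound $|Q|\ge M$ is the more delicate half of the argument: one must exploit uniqueness of outputs to preclude a single short spine from simultaneously serving many inputs by rebalancing its branching splits.
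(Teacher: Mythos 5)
Your proposal does not establish the proposition; the second half has a genuine gap, and the approach differs from the paper's in a way that matters. The paper's proof is short and direct: (i)~if every configuration on the spine of $\rho_x$ had a positive counter value, one could decrement every spine configuration by $1$ (leaving all discharge subtrees untouched) and obtain a witness for $p(x-1)\to^* q(2x-1)$, which is impossible since $2x-1\neq 2(x-1)$; hence for every $x\in\set{1,\ldots,M}$ the spine of $\rho_x$ contains a configuration with counter value $0$; (ii)~the spines of $\rho_x$ and $\rho_{x'}$ for $x\neq x'$ are disjoint as sets of configurations, since a shared configuration $r(z)$ would let one swap the contexts rooted there and witness $p(x)\to^* q(2x')$. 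Together these yield $M$ pairwise distinct configurations with counter value $0$, hence $M$ distinct states. Your cut-and-shift operation is the same mechanism as the uniform shift in (i), but you aim it at the wrong target.

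The gap is in your concluding step. Even granting the pumping claim, ``no state recurs on the spine'' only bounds each spine's length by $|Q|$; it does not bound how many inputs a state set of size $|Q|<M$ can serve. The number of repetition-free state sequences over $|Q|$ states is of order $|Q|!$ (times transition choices), which vastly exceeds $M$ already for $|Q|$ logarithmic in $M$, so pigeonhole over spines cannot give $|Q|\geq M$. Worse, a single state sequence could a priori serve several inputs by rebalancing the amounts discharged at its branching nodes; ruling this out is precisely the content you defer to ``the more delicate half of the argument'' without supplying it. There is also a circularity in the pumping step itself: you select the repeated pair $(i,j)$ according to the sign of $\delta=c_j-c_i$, but $\delta$ is determined only after the pair is chosen, and you never show that \emph{some} repeated pair always admits a shift keeping the tail inside $\set{0,\ldots,B}$. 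To repair the argument, redirect the shift: apply it to the \emph{entire} spine with offset $-1$ to force a zero-counter configuration on every spine with $x\geq 1$, and use your context-swapping observation to show these zero-counter configurations are distinct across inputs, which immediately yields $|Q|\geq M$.
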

\begin{proof}
Suppose $\B$ is a $\onebobr$ that computes $f$: there are $p,q\in Q$ such that for all $x\in \set{0,\ldots,M}$, $p(x)\to^* q(y)$ if and only if $y=2x$.
Here, $Q$ denotes the set of states of $\B$.

For every $x\in \set{0,\ldots,M}$, fix any run $\rho_x$ that witnesses $p(x)\to^* q(2x)$. The {\em{spine}} of $\rho_x$ is the path from the root (labelled $p(x)$) to the leaf labelled $q(2x)$ (in the special case when $q(2x) = q_0(0)$ this path might be not unique and the spine is defined as any path).

\begin{claim}
For every $x\in \set{1,\ldots,M}$, the spine of $\rho_x$ contains at least one configuration with counter value $0$.
\end{claim}
\begin{proof}
Suppose every configuration on the spine of $\rho_x$ has a positive counter value.
Consider run $\rho'$ obtained from $\rho_x$ by decrementing all the counter value in each configuration on the spine by $1$.
Then $\rho'$ witnesses that $p(x-1)\to^* q(2x-1)$, contradicting the assumption that $\B$ computes $f$.
\cqed\end{proof}

\begin{claim}
For all different $x,x'\in \set{0,\ldots,M}$, the sets of configurations on the spines of $\rho_x$ and $\rho_{x'}$ are disjoint.
\end{claim}
\begin{proof}
Suppose some configuration $r(z)$ appears both on the spine of $\rho_x$ and of $\rho_{x'}$.
Then replacing the context rooted at $r(z)$ in $\rho_x$ with the context rooted at $r(z)$ in $\rho_{x'}$ yields a run witnessing that $p(x)\to^* q(2x')$.
As $x\neq x'$, this contradicts the assumption that $\B$ computes~$f$.
\cqed\end{proof}

The proposition now follows immediately from the two claims above: For every $x\in \set{1,\ldots,M}$ the spine of $\rho_x$ has to contain some configuration with counter value $0$,
and these configuration have to be pairwise distinct for different $x$. This implies that $\B$ has to have at least $M$ states, as claimed.
\end{proof}

\section{Conclusion}
\label{sec:conclusion}

We have proved EXPTIME-completeness of reachability for $\twobobr$ leaving the complexity for $\onebobr$ between PSPACE and EXPTIME.
Our EXPTIME-hardness proof for $\twobobr$ heavily relied on the ability of manipulating a counter by multiplying and dividing it by $2$.
As shown by Proposition~\ref{proposition:no_double}, one counter alone is not enough to implement this functionality in the $\bobr$ model.

At this moment we do not have any clear indication on whether reachability for $\onebobr$ is EXPTIME-hard, or it rather leans towards PSPACE.
On one hand, EXPTIME-hardness would give us a natural problem to reduce from for other infinite-state systems with branching, similarly to the role served by reachability for $\onebovass$.
On the other hand, PSPACE-membership would be even more interesting, as intuitively it would show that in the $\onebobr$ model the branching transitions are too weak to emulate alternation,
contrary to the situation in multiple other models of similar kind.

To motivate the study of this problem even further, we highlight the connections between $\onebobr$ and $\twobvass$.
It is known that reachability for $\twovass$ is PSPACE-hard, however the only proof of hardness we are aware of uses the reduction from $\onebovass$. 
We show that similarly one can reduce $\onebobr$ to $\twobvass$.

\begin{proposition}
\label{proposition:2bvass}
The reachability problem for $\onebobr$ is reducible to the reachability problem for $\twobvass$ in polynomial time.
\end{proposition}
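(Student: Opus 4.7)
The plan is to adapt the classical complementary counter reduction from $\onebovass$ to $\twovass$. Each $\onebobr$-configuration $p(c)$ is encoded as the $\twobvass$-configuration $p(c, B-c)$, so that the invariant $c_1 + c_2 = B$ implicitly enforces the upper bound $c \le B$ via non-negativity of $c_2$. Unary $\onebobr$-transitions $(p, z, q)$ are then simulated by $\twobvass$-transitions $(p, (z, -z), q)$, which preserve this invariant. The main difficulty is branching: in a $\twobvass$ the branching splits counters component-wise, but splitting the encoding $(c, B-c)$ into two valid children $(c_1, B-c_1) + (c_2, B-c_2) = (c, 2B-c)$ requires $B$ more in the second counter than the parent holds. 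I would compensate by prefixing each $\onebobr$-branching with a unary transition adding $B$ to $c_2$, so that the subsequent $\twobvass$-branching can split $(c, 2B-c)$ into two valid encodings summing correctly.

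This prefix makes the intended split feasible, but the $\twobvass$-branching is of course free to split in ``incorrect'' ways where the children violate $c_1 + c_2 = B$. Such spurious splits, if unchecked, would let the simulation exceed the $\onebobr$-bound and accept runs that do not exist in the source. To exclude them, I would attach to the $\onebobr$-leaf state $q_0^\B$ a small gadget that pins down the encoding locally. Concretely, at $q_0^\B$ the system would branch into two fresh successors $r_1$ and $r_2$: in $r_1$ a unary $(0, -B)$ precedes a move to the initial state $q_0^{\B'}$, and in $r_2$ the move to $q_0^{\B'}$ is direct via a null unary. Since reaching $q_0^{\B'}(\vzero)$ forces both counters to hit zero, the only feasible split into $r_1, r_2$ is $(0, B) + (0, 0)$, pinning the counters at $q_0^\B$ to be exactly $(0, B)$. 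The target $\onebobr$-leaf $q(m)$ would be handled by setting the distinguished $\twobvass$-leaf directly to $(q, (m, B-m))$ in the context formulation.

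Correctness would then follow from a clean subtree balance. In the $\twobvass$-simulation tree, every $\onebobr$-branching gadget contributes $+B$ to the counter sum (via its $(0,B)$ unary) and every $q_0^\B$-leaf gadget contributes $-B$ (via $(0,-B)$ in $r_1$), while all remaining transitions as well as all $\twobvass$-branchings preserve the sum. For any subtree rooted at a simulation node $v$ with $b'$ underlying $\onebobr$-branchings -- and hence $b'+1$ underlying $\onebobr$-leaves each pinned to counter-sum $B$ -- this balance forces the counter sum at $v$ to equal $B$. This in turn forces every $\twobvass$-branching in the simulation to split into children each satisfying $c_1+c_2=B$, so every simulation node is a valid encoding and the $\twobvass$-run is a faithful image of a $\onebobr$-run. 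The main obstacle in this plan is the design of the leaf-pinning gadget: without it the $\twobvass$-branching is free to produce spurious splits that simulate out-of-bound $\onebobr$-runs, so the correctness of the whole reduction hinges on this small but crucial trick. Since each $\onebobr$-state and each $\onebobr$-transition expands into a constant-size $\twobvass$-gadget, the construction runs in polynomial time.
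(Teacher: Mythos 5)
Your proposal is correct and follows essentially the same route as the paper: a complementary-counter encoding $p(c)\mapsto p(c,\,\cdot\,)$ maintaining a fixed counter sum, a $+B$ correction on the second counter around each branching to make the component-wise split compatible with the encoding, pinning of the leaf configuration to force the invariant bottom-up, and a sum-balance induction for correctness. The differences (invariant $c_1+c_2=B$ versus $2B$, correcting the parent before the split versus the children after it, and a branching leaf gadget where a single $(0,-2B)$-style transition suffices) are only cosmetic.
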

\begin{proof}
Consider a $\onebobr$ $\B$, say with state set $Q$ and upper bound $B\in \N$.
The idea is to construct a $\twobvass$ $\V$ where every configuration $p(x)$ of $\B$ will be simulated by a configuration of the form $p(x,2B-x)$ of $\V$.

We start by including $Q$ in the state set of $\V$. 

Next, we model every unary transition $t=(u,k,v)$ of $\B$ using two unary transitions in~$\V$: 
$(u,(k,-B-k),r_t)$ followed by $(r_t,(0,B),v)$, where $r_t$ is a new state added for the transition~$t$. 
Note that this sequence of two transitions can be fired in configuration of the form $u(x,2B-x)$ if and only if $x+k\in \set{0,\ldots,B}$, 
and then the resulting configuration is $v(x+k,2B-(x+k))$.

Every branching transition $t=(u,v_1,v_2)$ of $\B$ is modelled in~$\V$ as follows. We add to~$\V$ two new states $s_{t,1}$ and $s_{t,2}$, a branching transition $(u,s_{t,1},s_{t,2})$, and unary transitions
$(s_{t,1},(0,B),v_1)$ and $(s_{t,2},(0,B),v_2)$. Observe that thus, after firing transition $(u,s_{t,1},s_{t,2})$ in configuration of the form $u(x,2B-x)$ we obtain two configurations of the form 
$s_{t,1}(x_1,y'_1)$ and $s_{t,2}(x_2,y'_2)$ satisfying $x_1+x_2=x$ and $y'_1+y'_2=2B-x$, in which we have to apply the transitions that add $B$ to the second counter, 
obtaining configurations $v_1(x_1,y_1)$ and $v_2(x_2,y_2)$ satisfying $x_1+x_2=x$ and $y_1+y_2=4B-x$. Hence, provided the initial split of the second counter was $(y'_1,y'_2)=(B-x_1,B-x_2)$, 
we indeed obtain configurations $v_1(x_1,2B-x_1)$ and $v_2(x_2,2B-x_2)$, as was our initial intention. We will later argue that in runs in which we are interested, 
in all branching transitions the second counter has to be split as above.

Finally, if $q_0$ is the initial state of $\B$, we add a new initial state $q_0'$ to $\V$ together with a transition $(q_0, (0,-2B), q_0')$. 
Thus, the only way to arrive at configuration $q_0'(0,0)$ is to apply this transition in configuration $q_0(0,2B)$.

Now consider any pair of configurations $p(n),q(m)$ of $\B$, where $n,m\in \set{0,\ldots,B}$. We claim that $p(n)\to^* q(m)$ in $\B$ 
if and only if $p(n,2B-n)\to^* q(m,2B-m)$ in $\V$.

For the forward direction, consider a partial run $\pi$ of $\B$ witnessing $p(n)\to^* q(m)$ and modify it to a run $\rho$ of $\V$ as follows. 
First, replace every configuration of the form $u(x)$ in $\pi$ with configuration $u(x,2B-x)$.
Next, replace the transitions of $\B$ in $\pi$ with sequences of transitions of $\V$ in the expected way, as described above.
In particular, whenever in $\pi$ a branching transition $(u,v_1,v_2)$ is applied to configuration $u(x)$ and results in configurations $v_1(x_1)$ and $v_2(x_2)$,
in $\rho$ we model this by applying a branching transition to $u(x,2B-x)$ and obtaining $s_{t,1}(x_1,B-x_1)$ and $s_{t,2}(x_2,B-x_2)$, which are subsequently modified to $v_1(x_1,2B-x_1)$ and $v_2(x_2,2B-x_2)$.
Then $\rho$ witnesses that $p(n,2B-n)\to^* q(m,2B-m)$ in $\V$.

For the backward direction, consider a partial run $\rho$ of $\V$ witnessing $p(n,2B-n)\to^* q(m,2B-m)$.
A node of $\rho$ will be called {\em{principal}} if the state of its configuration belongs to the original state set $Q$; other nodes will be called {\em{auxiliary}}.
Note that in $\rho$, every second level consists entirely of principal nodes, and every other level consists entirely of auxiliary nodes.
More specifically, every principal node $a$, say labelled with a configuration $u(x,y)$, is of one of the following four kinds:
\begin{itemize}
\item $a$ has one child and one grandchild, labelled respectively with configurations $r_t(x+k,y-k-B)$ and $v(x+k,y-k)$ for some unary transition $(u,k,v)$ of $\B$;
\item $a$ has two children, each having one child of its own, and these are respectively labelled with configurations $s_{t,1}(x_1,y'_1),s_{t,2}(x_2,y'_2),v_{1}(x_1,y'_1+B),v_{2}(x_2,y'_2+B)$ 
for some branching transition $(u,v_1,v_2)$ of $\B$;
\item $a$ has one leaf child labelled with $q_0'(0,0)$, so in particular the configuration at $a$ is $q(x,z)=q_0(0,2B)$; or
\item $a$ is the unique leaf labelled with $q(m,2B-m)$.
\end{itemize}
These types of principal nodes will be called {\em{unary}}, {\em{branching}}, {\em{leaf}}, and {\em{final}}, respectively.

We first prove by a bottom-up induction over $\rho$ that in all principal nodes, the sum of counter values is equal to $2B$.
This trivially holds for all principal nodes that are leaf or final.
Next, whenever $a$ is a unary principal node with configuration $u(x,y)$, then by the induction assumption for its grandchild we infer that $(x+k)+(y-k)=2B$ for some integer~$k$, implying $x+y=2B$ as required.
The case of branching principal nodes is analogous.

We now prove by a top-down induction over $\rho$ that in all principal nodes, the value of the first counter is upper bounded by $B$.
This is certainly true in the root, since we assumed that $n\leq B$.
Suppose $a$ is a unary principal node with configuration $u(x,2B-x)$ for which we already know that $x\leq B$, 
and its child and grandchild are respectively labelled with $r_t(x+k,B-x-k)$ and $v(x+k,2B-x-k)$ for some unary transition $(u,k,v)$ of $\B$.
Then the existence of the child's configuration implies that $B-x-k\geq 0$, implying that $x+k\leq B$, as required.
Finally, observe that if $a$ is a branching principal node, then the sum of the first counter values in its grandchildren is equal to the first counter value in $a$, hence an upper bound of $B$ on the latter
entails an upper bound of $B$ on the former.

All in all, we argued that all principal nodes in $\rho$ have configurations of the form $u(x,2B-x)$, where $u\in Q$ and $x\in \set{0,\ldots B}$.
It now remains to observe that taking all principal nodes with tree order induced from $\rho$ and forgetting the second counter value yields a run of $\B$ witnessing that $p(n)\to^* q(m)$ in $\B$.
\end{proof}

\subparagraph*{Acknowledgements.}
The authors would like to thank Marthe Bonamy for feeding them during the work on this project.
No beavers were harmed in the making of this paper.



\bibliography{bobrvas}

\end{document}